\newcommand{\tr}{{\rm tr}\,}
\newcommand{\ket}[1]{\left|{#1}\right\rangle}
\newcommand{\bra}[1]{\left\langle{#1}\right|}
\newcommand{\ketbra}[2]{\left|{#1}\rangle\!\langle{#2}\right|}
\DeclareMathOperator{\Tr}{Tr}
\newtheorem{theorem}{Theorem}
\newtheorem{definition}{Definition}
\newtheorem{corollary}{Corollary}
\newtheorem{postulate}{Postulate}
\begin{document}
\title{Classicality from Quantum Stochastic Processes}
\date{November 24 2023}
\author{Esteban Mart\'inez Vargas}
\email{Esteban.Martinez@uab.cat}
\affiliation{F\'isica Te\`orica: Informaci\'o i Fen\`omens Qu\`antics, Departament de F\'isica, Universitat Aut\`onoma de Barcelona, 08193 Bellatera (Barcelona) Spain}

\begin{abstract}
    I develop a theory of classicality from quantum systems. This theory stems
    from the study of classical and quantum stationary stochastic processes.
    The stochastic processes are characterized by polyhedral (classical) and
    semidefinite representative (quantum) cones.
    Based on a previous result by the author I expand the study of 
    fixed points from quantum channels. I give a semidefinite program that
    characterizes a quantum channel separating into a core and a part that
    decays with many iterations. In general, the solution is non-separable in the
    space it is defined. I present a characterization of channels in terms of 
    their fixed points for the separable case. A quantum simulation of a polyhedral 
    cone can then be constructed.
\end{abstract}

\maketitle
\section{Introduction}
Stochastic processes is a very
general framework to describe a wide variety of systems, in biology, economics,
chemistry, physics, etc. \cite{van1992stochastic,mitzenmacher_upfal_2005}. 
Specifically, the modeling through Hidden Markov Models has been widely studied \cite{ATutorialOnHRabine1989}.
These kinds of models arise when we have a stationary structure. 
These objects have 
been widely studied classically and in quantum systems \cite{QuantumStochasMilz2021}. 
Quantum stochastic processes are needed because quantum systems are always open to complex environments that
affect the evolution of a system and for foundational aspects of quantum mechanics \cite{breuer2002theory}.
Even though intuition indicates that the simplest and most practical description would 
correspond to classical information theory, several results show that
there is an advantageous simplicity when using quantum systems to describe stochastic 
processes even when having classical systems \cite{QuantumMechaniGuMi2012,DimensionalQuaGhafar2019,QuantumAdvantaKorzek2021}.
This means that quantum mechanics would be a natural language for stochastic processes.
It reminds of Ockham's razor \cite{audi_2015}, the philosophical principle that states 
that when describing the cause of a phenomenon a good practice is to think in the
simplest possible explanation. 
Nevertheless, this perspective would imply a paradoxical worldview: if quantum mechanics
is the most natural way to describe stochastic processes, which is a very general tool
to describe aspects of the world (classical and quantum) then why does our world seem classical?

Quantum stochastic processes thus remit us to a question that was 
raised since the inception of quantum mechanics a hundred years ago, that is
the passage from quantum to classical dynamics is usually expressed as the
correspondence principle \cite{BOHR1928}.
There is a large amount of work in this respect, specifically in the area of einselection 
and quantum Darwinism
\cite{DecoherenceEiZurek2003,QuantumDarwiniZurek2009,QuantumDarwiniRiedel2010}.
Existing an advantage in terms of information one would ask why would classicality even exist. 
Here we aim to study this topic using the formalism of quantum stochastic processes. 

The approach that the theory of einselection assumes is that there is a constant
feature of a system, which is its environment (or system plus environment). It thus aims to study the features of the
system which are resistant to decoherence. Classicality from a quantum perspective
is thus defined as those features which persist in time.
In this line of thought, Hidden Markov Models arise when the notion of stationarity
is relevant, therefore, such stationarity of a stochastic process remits us to think in a persistent structure that produces it.

However, there exist stationary processes produced from quantum sources, as 
characterized by Monràs and Winter in their ('O Scarrafone) theorem \cite{QuantumLearninMonras2016}.
There, they characterize the most general stationary stochastic process produced
by a quantum source. Therefore, stationarity in itself is not sufficient for 
the notion of classicality. In einselection a central aspect is also the objectivity
of a specific basis, there are einselected states. The fact that these states are
unchanged by the dynamics is also central to the objectivity of the system.
Therefore, we will understand classicality in a quantum stochastic process if it 
fulfills two conditions:
\begin{itemize}
    \item Persistence in time.
    \item Objectivity of its generating states.
\end{itemize}

In this paper, we explore a possible mechanism for the persistence in
time of an objective set of quantum states that gives place to a Markov process. 
We consider discrete uses of a quantum channel and the objective set will be made
of the fixed points of the channel. 
Observe that quantum channels have at 
least one fixed point \cite{WatrousTheTheoryof2018}. 
We here consider channels with multiple fixed points. A finite collection of such
points which are vectors form what is known as a polyhedral cone. Following a theorem
by Dharmadhikari \cite{SufficientCondSWD1963} this is a necessary and sufficient 
condition to have a stationary Markov process. We thus restrict Monràs and Winter
conditions to those asked by Dharmadhikari.

Although the fixed points of quantum channels have been studied in the literature \cite{WatrousTheTheoryof2018,FixedPointsOfArias2002}, 
it is a cumbersome topic as a part of some theorems, given a channel there is no general
way of finding its fixed points and
numerical simulations are almost always the norm \cite{OnPeriodCyclCarbon2020}. 
Here we give a characterization of quantum channels inversely:
given a finite number of states explore all the quantum channels that have them as
fixed points. We extend the results of \cite{2209.06806v1} to consider multiple fixed point
channels and give an example of the power of this approach. 

First, we introduce Dharmadhikari's and Mondràs and Winter theorems. Then explain the problem of classicality from the quantum perspective as a ``cone reduction'' problem. 
Finally, we introduce our study of multiple fixed point quantum channels and apply it to
an example. We finish with a discussion.

\section{Quasi-realizations of stochastic processes}
To study the different dynamics, classical and quantum, a general framework of
stationary stochastic processes will be introduced: the theory of quasi-realizations. 
From an abstract point of view, a stochastic process is given by an alphabet of symbols
$\mathcal{M}$ with size $|\mathcal{M}|=m$ and we denote $\mathcal{M}^l$ the set of words
of length $l$. We define
\begin{equation}
    \mathcal{M}^* = \cup_{l\geq0}\mathcal{M}^l.
\end{equation}
We can obtain the probability of a specific word $\mathbf{u}=(u_1,u_2,\ldots,u_l)\in\mathcal{M}^l$,
\begin{equation}
    p(\mathbf{u})=p(Y_1=u_1,Y_2=u_2,\ldots,Y_n=u_n).
\end{equation}
It will be relevant to study stationary distributions as they describe the stochastic process 
asymptotically.

We would like to infer what is the inner mechanism that 
gives rise to a stochastic process $Y$. A very widely known kind of matrices that produce a stochastic process are the stochastic matrices, which are non-negative
matrices whose rows sum 1. However, in general, the hidden mechanism of a 
stochastic process need not be described by a stochastic matrix. 
To see clearly this affirmation we make the following definition  
\begin{definition}
    A quasi-realization of a stochastic process is a quadruple ($\mathcal{V},\pi,D,\tau$),
    where $\mathcal{V}$ is a vector space, $\tau\in\mathcal{V}$, $\pi\in\mathcal{V}^*$, the dual
    space to $\mathcal{V}$ and
    $D$ is a unital representation of $\mathcal{M}^*$ over $\mathcal{V}$,
    \begin{align}
        D^{(\varepsilon)} &= \mathds{1},\\
        D^{(u)}D^{(v)}&=D^{(uv)},\quad\forall u,v\in\mathcal{M}^*.
    \end{align}
\end{definition}

\begin{definition}
    We will call \textbf{cause matrices} to the matrices 
    \begin{equation}
        D^c = \sum_{u\in\mathcal{M}}D^{(u)},
    \end{equation}
    where $D^{(u)}$ was defined above.
\end{definition}

Observe that several quasi-realizations can yield the same stochastic process, we will call
\emph{equivalent} two quasi-realizations that generate the same stochastic process. 
For us, the relevant outcome of this definition will be the quantum version, which means, to find 
stochastic processes where their cause matrices are not necessarily stochastic matrices.
\subsection{Classical cones: Dharmadhikari's theorem}
Observe now that being the stochastic matrices a subclass of cause matrices then
we need to find out the conditions when the cause matrices of a quasi-realization
($\mathds{R}^d,\pi,M,\tau$) become nonnegative matrices and
\begin{equation}
    M^s = \sum_{u\in \mathcal{M}}M^{(u)},
    \label{eq:BigM}
\end{equation}
is stochastic. 
This is precisely given by Dharmadhikari's theorem, it gives us the conditions
for having a positive realization, which means, that eq. (\ref{eq:BigM}) 
is fulfilled, $\pi\in(\mathds{R}^d)^*$ is a stationary distriburion and $\tau=(1,1,\ldots,1)$. 

\begin{theorem}
    \label{thm:dharma}
    Given a quasi-realization ($\mathcal{V},\pi,D,\tau$), an equivalent positive
    realization exists if and only if there is a convex pointed polyhedral cone
    $\mathcal{C}\subset\mathcal{V}$ such that
    \begin{itemize}
        \item $\tau\in\mathcal{C}$.
        \item $D^{(v)}(\mathcal{C})\subseteq\mathcal{C}$.
        \item $\pi\in\mathcal{C}^*$.
    \end{itemize}
    With $\mathcal{C}^*$ the dual cone of $\mathcal{C}$.
\end{theorem}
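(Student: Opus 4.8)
The plan is to exploit the word-probability formula that any quasi-realization induces, namely
\begin{equation}
    p(\mathbf{u}) = \pi\, D^{(u_1)} D^{(u_2)} \cdots D^{(u_l)}\, \tau ,
\end{equation}
so that two realizations are equivalent precisely when they yield the same scalar for every word $\mathbf{u}$. The guiding observation is that a positive realization already carries a distinguished cone: the nonnegative orthant $\mathds{R}^d_+$. It is convex, pointed and polyhedral; nonnegativity of the matrices means $M^{(v)}(\mathds{R}^d_+)\subseteq\mathds{R}^d_+$; the choice $\tau=(1,\ldots,1)$ gives $\tau\in\mathds{R}^d_+$; and since the orthant is self-dual, the stationary distribution $\pi\geq 0$ sits in $(\mathds{R}^d_+)^*$. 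Thus all three conditions hold automatically on the positive side, and the content of the theorem is that this cone structure is the \emph{only} obstruction to positivity.

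For the forward direction ($\Rightarrow$) I would transport this orthant back to $\mathcal{V}$. Given an equivalent positive realization $(\mathds{R}^d,\pi_+,M,\tau_+)$ with $\tau_+=(1,\ldots,1)$, I would reduce to minimal realizations and invoke the standard fact that two minimal equivalent quasi-realizations are related by an invertible intertwiner $T:\mathds{R}^d\to\mathcal{V}$ obeying $D^{(v)}T = T M^{(v)}$, $T\tau_+=\tau$ and $\pi T = \pi_+$. Setting $\mathcal{C}:=T(\mathds{R}^d_+)$ then produces a convex pointed polyhedral cone, since the linear image of a polyhedral cone is polyhedral and injectivity of $T$ forbids $\mathcal{C}$ from containing a line. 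The intertwining relation converts $M^{(v)}(\mathds{R}^d_+)\subseteq\mathds{R}^d_+$ into $D^{(v)}(\mathcal{C})\subseteq\mathcal{C}$; the identity $\tau=T\tau_+$ gives $\tau\in\mathcal{C}$; and $\langle\pi,Tx\rangle=\langle\pi_+,x\rangle\geq0$ for $x\in\mathds{R}^d_+$ gives $\pi\in\mathcal{C}^*$.

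For the converse ($\Leftarrow$) I would manufacture the positive realization from the generators of $\mathcal{C}$. Because $\mathcal{C}$ is pointed and polyhedral it has finitely many extreme rays, spanned by vectors $c_1,\ldots,c_d$; invariance $D^{(v)}(\mathcal{C})\subseteq\mathcal{C}$ forces $D^{(v)}c_i=\sum_j [M^{(v)}]_{ji}\,c_j$ with $[M^{(v)}]_{ji}\geq0$, reading off nonnegative matrices $M^{(v)}$ in the generator coordinates. Writing $\tau=\sum_i \tau_i c_i$ with $\tau_i\geq0$ (from $\tau\in\mathcal{C}$) and pairing with $\pi\in\mathcal{C}^*$ to get entries $\langle\pi,c_i\rangle\geq0$ supplies the remaining data. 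The one delicate point is normalization: I would rescale the generators and renormalize $\pi$ so that $\tau$ becomes $(1,\ldots,1)$ while the total-probability constraint $\pi(D^c)^l\tau=1$ simultaneously forces $M^s=\sum_u M^{(u)}$ to satisfy $M^s\tau=\tau$ (row-stochasticity) and $\pi M^s=\pi$ (stationarity of $\pi$).

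The main obstacle I anticipate is precisely this normalization bookkeeping in the converse, together with the minimality reduction in the forward direction. One must check that the marginalization identity carried by any stochastic process --- equivalently $D^c\tau=\tau$ on the invariant subspace --- survives the rescaling that loads the all-ones vector onto the generators, so that $M^s$ achieves exact unit row sums rather than mere nonnegativity. Verifying that the invariant cone can be taken inside the minimal space, instead of an artificially inflated $\mathcal{V}$, is the conceptual crux that ties Dharmadhikari's polyhedral criterion to the genuine existence of a hidden Markov model.
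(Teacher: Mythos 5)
Your attempt cannot be compared against the paper's own argument for a simple reason: the paper never proves this statement, it imports it verbatim from Dharmadhikari \cite{SufficientCondSWD1963}. Judged on its own merits, your overall strategy is indeed the classical one (the nonnegative orthant is the canonical invariant cone of a positive realization; transport it to $\mathcal{V}$ for necessity; read the generators of $\mathcal{C}$ as hidden states for sufficiency), and the sufficiency half of your sketch is essentially on track, modulo the normalization bookkeeping you flag yourself.

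The genuine gap is in the forward direction. You cannot ``reduce to minimal realizations'' and keep the orthant: a positive realization is in general \emph{not} minimal as a quasi-realization, and it cannot be made minimal while staying positive --- the smallest positive realization of a process typically has strictly larger dimension than its minimal quasi-realization, and this discrepancy is precisely what makes the positive-realization problem (and hence this theorem) nontrivial. Consequently no invertible intertwiner $T:\mathds{R}^d\to\mathcal{V}$ exists in general (the dimensions need not even match); what exists is a non-injective intertwiner $S$, defined on the reachable subspace of $\mathds{R}^d$ by $S\bigl(M^{(\mathbf{v})}\tau_+\bigr)=D^{(\mathbf{v})}\tau$, well defined when the target realization is observable. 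The image of a pointed cone under a non-injective linear map need not be pointed (a projection can flatten a pointed cone onto a full line), so deriving pointedness from ``injectivity of $T$'' collapses. The correct replacement is an observability argument: for any cone satisfying the three listed conditions, the lineality space $L=\mathcal{C}\cap(-\mathcal{C})$ is invariant under every $D^{(u)}$ and $\pi$ vanishes on it, hence $\pi D^{(\mathbf{u})}$ vanishes on $L$ for every word $\mathbf{u}$, i.e. $L$ lies in the unobservable subspace; observability (minimality) of $(\mathcal{V},\pi,D,\tau)$ then forces $L=0$. This also exposes that the theorem as stated really concerns the regular (minimal) representation, which is the setting of Dharmadhikari's original result. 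A related soft spot appears in your converse: from $TM^{s}=D^{c}T$ with non-injective $T$ you cannot conclude $M^{s}\tau_+=\tau_+$, and even $D^{c}\tau=\tau$ holds only weakly (on the reachable subspace) for a general quasi-realization. The standard repair is not a rescaling of $\tau$ but of $\pi$: intersect $\mathcal{C}$ with the reachable subspace, normalize the generators so that $\pi(c_i)=1$ (generators with $\pi(c_i)=0$ are unobservable and must be argued away separately), and use $\pi D^{c}=\pi$ on that subspace to obtain exact unit row sums.
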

We thus need that all the dynamics to be restricted to a polyhedral cone.
\subsection{SDR cones: Scarrafone}
For quantum systems the type of cause matrices is different. The 
characterization of the quasi-realizations, in this case, is related to the
theorem in ref \cite{QuantumLearninMonras2016}.
\begin{theorem}
    \label{thm:scarrafone}
    Given a quasi-realization $(\mathcal{V},\pi,D,\tau)$ an
    equivalent, finite-dimensional, unital, completely positive realization
    $(\mathcal{B(H)}^{sa},\rho,\varepsilon,\mathcal{I})$ exists if and only
    if there is a SDR cone $\mathcal{P}\subset\mathcal{V}\otimes\mathcal{V}^*$
    such that
    \begin{itemize}
        \item $\tau\in\mathcal{C}$.
        \item $D^{(u)}\in\mathcal{P}$ for all $u\in\mathcal{M}$.
        \item $\pi\in\mathcal{C}^*$.
    \end{itemize}
    With $\mathcal{C}$ a SDR cone and $\mathcal{C}^*$ its dual defined in
    \cite{QuantumLearninMonras2016}. 
\end{theorem}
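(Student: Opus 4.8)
The plan is to prove both implications in parallel with the proof of Dharmadhikari's theorem (Theorem~\ref{thm:dharma}), substituting the semidefinite-representable structure for the polyhedral one. The bridge between the abstract dynamics $D^{(u)}$ and a concrete completely positive realization is the Choi--Jamio\l{}kowski isomorphism: a linear map on $\mathcal{B(H)}^{sa}$ is completely positive precisely when its Choi operator is positive semidefinite, and it is unital exactly when that operator satisfies the partial-trace normalization fixing $\mathcal{I}$. Under this identification the SDR cone $\mathcal{C}\subset\mathcal{V}$ plays the role of the cone of positive semidefinite states, while $\mathcal{P}\subset\mathcal{V}\otimes\mathcal{V}^*$ is the cone of maps sending $\mathcal{C}$ into itself, i.e.\ the completely positive maps; the classical invariance condition $D^{(v)}(\mathcal{C})\subseteq\mathcal{C}$ of the previous theorem is thereby repackaged as the membership $D^{(u)}\in\mathcal{P}$.

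For the direction that assumes a CP realization $(\mathcal{B(H)}^{sa},\rho,\varepsilon,\mathcal{I})$, I would take $\mathcal{V}=\mathcal{B(H)}^{sa}$ and let $\mathcal{C}$ be the cone of positive semidefinite operators, which is visibly semidefinite representable. Each generator $\varepsilon^{(u)}$ is completely positive, so its Choi operator is positive semidefinite and hence lies in the SDR cone $\mathcal{P}$ of CP maps; this yields $D^{(u)}\in\mathcal{P}$. Unitality provides $\mathcal{I}\in\mathcal{C}$, identifying $\tau$, while the state $\rho$, being positive with unit trace, furnishes a functional in $\mathcal{C}^*$, identifying $\pi$. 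The assumed equivalence of the two quasi-realizations then guarantees that these identifications reproduce the word probabilities $p(\mathbf{u})$.

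For the converse, I would begin from an SDR cone $\mathcal{C}$ together with $\mathcal{P}$ and the three membership conditions, and construct the realization explicitly. Semidefinite representability means $\mathcal{C}$ is the image under a linear map of an intersection of the positive semidefinite cone with an affine slice; realizing $\mathcal{C}$ concretely as a PSD cone on a Hilbert space $\mathcal{H}$ lets me transport each $D^{(u)}\in\mathcal{P}$ to a positive semidefinite Choi operator, and hence to a completely positive map $\varepsilon^{(u)}$. The condition $\tau\in\mathcal{C}$ fixes the identity $\mathcal{I}$ and supplies unitality, while $\pi\in\mathcal{C}^*$ yields a genuine density operator $\rho$. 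It then remains to verify that the composition law $D^{(u)}D^{(v)}=D^{(uv)}$ is carried to composition of the maps $\varepsilon^{(u)}$ and that the resulting probabilities coincide with the original process.

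The hard part will be the semidefinite-representability bookkeeping: showing that the cone of CP maps is itself semidefinite representable inside $\mathcal{V}\otimes\mathcal{V}^*$, and that the dual condition $\pi\in\mathcal{C}^*$ correctly encodes both positivity and normalization of $\rho$ under the Choi identification. One must track simultaneously how the isomorphism interacts with the tensor structure of $\mathcal{V}\otimes\mathcal{V}^*$ and with the unital constraint, since these two requirements are coupled rather than independent. Finally, guaranteeing that the abstract multiplicativity survives as genuine operator composition is where the representation-theoretic content of the quasi-realization must be invoked, rather than positivity alone.
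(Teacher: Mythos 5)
First, a point of reference: the paper does not prove this statement at all --- it is the ('O Scarrafone) theorem of Monr\`as and Winter, imported verbatim from \cite{QuantumLearninMonras2016}, with even the definitions of the SDR cone $\mathcal{C}$ and its dual deferred to that reference. So there is no in-paper proof to compare your attempt against; it has to be judged against the actual argument in \cite{QuantumLearninMonras2016}, which is considerably more involved than your sketch.

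Judged on that basis, there is a genuine gap, and it sits exactly where the theorem's content lies. You assert that $\mathcal{P}$ is ``the cone of maps sending $\mathcal{C}$ into itself, i.e.\ the completely positive maps,'' thereby repackaging the invariance condition $D^{(v)}(\mathcal{C})\subseteq\mathcal{C}$ of Theorem \ref{thm:dharma} as the membership $D^{(u)}\in\mathcal{P}$. That identification is false: maps preserving the positive semidefinite cone are the \emph{positive} maps, a strictly larger class than the completely positive ones (the transpose map is the standard counterexample). This distinction is not bookkeeping --- it is precisely why the quantum theorem demands membership of each $D^{(u)}$ in an SDR cone $\mathcal{P}\subset\mathcal{V}\otimes\mathcal{V}^*$ rather than mere invariance of $\mathcal{C}$, and why the proof cannot be obtained by transcribing Dharmadhikari's argument with ``polyhedral'' replaced by ``semidefinite representable.'' Relatedly, in your converse direction you propose ``realizing $\mathcal{C}$ concretely as a PSD cone on a Hilbert space''; but an SDR cone is by definition a linear image of an affine slice of a PSD cone, not a PSD cone itself, and the substantive work is to lift the abstract operators $D^{(u)}$ --- which act on $\mathcal{V}$ and a priori only respect the projected cone --- to completely positive maps on the ambient $\mathcal{B(H)}^{sa}$, compatibly with the representation property $D^{(u)}D^{(v)}=D^{(uv)}$ and with the word probabilities $p(\mathbf{u})$. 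Constructing that lift is exactly what the hypothesis $D^{(u)}\in\mathcal{P}$ is for, and your sketch defers it to ``semidefinite-representability bookkeeping'' without indicating how it would go. Your forward direction (CP realization $\Rightarrow$ cone conditions) is essentially sound, but the converse as written does not close.
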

\section{Classical dynamics as a fixed point problem}
Any quantum dynamics is described by an SDR cone. This cone includes an instrument that 
maps states from the cone into itself. Theorem (\ref{thm:scarrafone}) offers a description
for general instruments. Observe, however, that for instruments with a constant channel 
in all iterations asymptotic behavior of channels becomes relevant. 
This situation implies analyzing the channels from the perspective of their fixed points,
as all channels have at least one by theorem (4.24) of \cite{WatrousTheTheoryof2018}.

The theory of quantum einselection by Zurek et. al. implies a model of reduction from quantum dynamics 
to classical dynamics. It states that the classical behavior of a system is given by certain states
that are selected over time by the interaction Hamiltonian which means, the interaction with 
an environment. Observe that this approach implies a continuous evolution of a system. The einselected
states are the ones that remain over time. Another important aspect is that the environment, and thus
the Hamiltonian remains constant over time. A Hamiltonian of changing potential would change itself 
and therefore einselection would not be possible.

We here want to study an analogous point of view. The classical dynamics will be given by a 
polyhedral cone. The reduction from quantum to classical will be given in terms of fixed points of
a quantum channel. The time evolution will thus be discrete, but always maintaining a channel constant.
We do not require that an environment imposes some evolution but we require constant dynamics. 
The classical cone is a result of constant quantum dynamics.

This point of view allows us to extend the einselection process from only a set of pure states to
possibly mixed states. Such states would not necessarily be orthogonal. We will show a decomposition
theorem for channels, to decompose any channel into its fixed point plus a part that goes to zero.
Then we show a bound to obtain this decomposition given a channel. Finally, we explore the decomposition
of multiple fixed points, which would imply a polyhedral cone.
\section{Cone reduction problem}
In this section, we develop a general theory to describe the transition between
quantum mechanics and classical mechanics. 
Starting from the historical perspective, the theorem (\ref{thm:dharma}) by Dharmadhikari finds the
conditions for the existence of equivalent positive realizations given a realization.
We therefore can translate this mathematical statement into a physically relevant
one with the following postulate
\begin{postulate}
    \label{pos:class}
    Any discrete classical transformation is described by a convex polyhedral cone. 
\end{postulate}

In general, any quantum Markov process is inscribed in a SDR cone. Following the
principles of quantum mechanics and the theorem (\ref{thm:scarrafone}) by Monràs and Winter 
it is natural to state the following postulate
\begin{postulate}
    \label{pos:sdrcone}
    Any discrete quantum mechanical transformation is described within a
    SDR cone.
\end{postulate}

To our knowledge quantum mechanics is a fundamental and universal theory \cite{CosmicBellTesHandst2017}.
Therefore, any classical dynamics should arise from a SDR cone, therefore the following
postulate arises naturally
\begin{postulate}
    Any classical (convex polyhedral) cone is embedded in a larger quantum (SDR) cone.
    \label{pos:embed}
\end{postulate}

Notwithstanding, this last postulate thus demands a mathematical treatment that
makes it more concrete. In Fig. \ref{fig:conos} we present a diagram of a quantum
SDR cone containing a classical cone.
The main problem then is how to reduce a SDR cone to a classical polyhedral one.
As mentioned before, a simple reduction mechanism is to suppose an instrument with a constant 
channel all the time. Then, because
of theorem (4.24) from \cite{WatrousTheTheoryof2018} the channel has at least one
fixed point. The output of the channel is reduced to a single state. 
However, a channel can have several fixed points, thus conforming a polyhedral
cone. This motivates the study of channels with multiple fixed points.
\section{Multiple fixed point channels}
We first cite a result from \cite{2209.06806v1} we have the following 
characterization of a channel in terms of its fixed point.

\begin{theorem}
    \label{thm:decomposition}
    Given a state $\sigma$ we can describe a trace-preserving separable family of channels with fixed point $\sigma$
    in terms of its Choi matrix $\mathcal{C}$ as follows
    \begin{equation}
        \mathcal{C}=\sigma\otimes\frac{(\ketbra{V_{max}}{V_{max}})^\intercal}{\lambda_{max}}
        +B\otimes(\mathds{I}-\frac{(\ketbra{V_{max}}{V_{max}})^\intercal}{\lambda_{max}}),
        \label{eq:chandecomp}
    \end{equation}
    $\lambda_{max}$ is the maximum eigenvalue of $\sigma$ and $\ket{V_{max}}$ its correspondent eigenvector.
    $B$ is a state. This description is valid for $\bra{V_{max}}B\ket{V_{max}}\leq\lambda_{max}$ and
    any input state $\bra{V_{max}}\rho\ket{V_{max}}\leq\lambda_{max}$.
%
\end{theorem}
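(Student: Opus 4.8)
The plan is to read (\ref{eq:chandecomp}) as a constructive ansatz and verify, in turn, that (i) $\sigma$ is genuinely a fixed point, (ii) the induced map is trace preserving, (iii) it is completely positive, and (iv) it carries admissible inputs to valid states, with the two quoted inequalities emerging from (iii) and (iv). Throughout I would work with the Choi matrix directly, recovering the channel through $\Phi(\rho)=\Tr_2[\mathcal{C}(\mathds{I}\otimes\rho^\intercal)]$, the convention matched by the transpose appearing in (\ref{eq:chandecomp}). Writing $P=(\ketbra{V_{max}}{V_{max}})^\intercal/\lambda_{max}$ and $Q=\mathds{I}-P$ so that $\mathcal{C}=\sigma\otimes P+B\otimes Q$, a short computation using $\Tr[P\rho^\intercal]=\bra{V_{max}}\rho\ket{V_{max}}/\lambda_{max}$ gives the explicit action
\begin{equation}
    \Phi(\rho)=\frac{\bra{V_{max}}\rho\ket{V_{max}}}{\lambda_{max}}\,\sigma+\left(1-\frac{\bra{V_{max}}\rho\ket{V_{max}}}{\lambda_{max}}\right)B.
\end{equation}
The fixed-point property is then immediate from $\bra{V_{max}}\sigma\ket{V_{max}}=\lambda_{max}$, which makes the weight equal to one so that $\Phi(\sigma)=\sigma$. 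Trace preservation is equally direct, since $\Tr_1\mathcal{C}=(\Tr\sigma)P+(\Tr B)Q=P+Q=\mathds{I}$ because $\sigma$ and $B$ are normalized states.

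The main obstacle is complete positivity, $\mathcal{C}\ge 0$. This is genuinely delicate precisely because $\lambda_{max}\le 1$ forces $P$ to carry the eigenvalue $1/\lambda_{max}\ge 1$, so $Q=\mathds{I}-P$ is \emph{not} positive semidefinite and the ansatz is not a manifest sum of positive tensor factors. My approach is to exploit that $P$ and $Q$ are simultaneously diagonal on the second factor, which makes $\mathcal{C}$ block diagonal with respect to the splitting of that factor into the range of $P$ and its orthogonal complement. On the complement the block is simply $B\ge 0$, whereas on the one-dimensional range of $P$ the block is $M=\tfrac{1}{\lambda_{max}}\sigma+(1-\tfrac{1}{\lambda_{max}})B$. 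Positivity of $\mathcal{C}$ therefore reduces to the single operator inequality $M\ge 0$, equivalently $\sigma\ge(1-\lambda_{max})B$, and determining which admissible states $B$ satisfy it is where I expect the bulk of the effort to concentrate.

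Finally, I would identify the two quoted inequalities as the conditions that keep the construction self-consistent on its intended domain. From the explicit action, $\Phi(\rho)$ is a convex combination of $\sigma$ and $B$ with weight $t=\bra{V_{max}}\rho\ket{V_{max}}/\lambda_{max}$; requiring this weight to lie in $[0,1]$ gives exactly the input condition $\bra{V_{max}}\rho\ket{V_{max}}\le\lambda_{max}$, after which $\Phi(\rho)$ is automatically a normalized positive state. Evaluating $\bra{V_{max}}\Phi(\rho)\ket{V_{max}}=t\lambda_{max}+(1-t)\bra{V_{max}}B\ket{V_{max}}$ with $t\in[0,1]$ then shows that the set of states obeying $\bra{V_{max}}\cdot\ket{V_{max}}\le\lambda_{max}$ is preserved exactly when $\bra{V_{max}}B\ket{V_{max}}\le\lambda_{max}$, which is the stated admissibility bound on $B$ and is what renders $\Phi$ a consistent self-map that can be iterated. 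Assembling the fixed-point, trace-preservation, positivity, and invariance checks then establishes that every admissible $B$ yields a legitimate trace-preserving, completely positive map of the advertised separable form with $\sigma$ as a fixed point.
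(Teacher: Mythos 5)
A preliminary remark: the paper itself offers no proof of Theorem~\ref{thm:decomposition} --- it is quoted from \cite{2209.06806v1} --- so your proposal can only be measured against the statement, not against an argument in the text. Measured that way, most of what you do is correct and is the natural way to unpack the claim: the explicit action $\Phi(\rho)=t\,\sigma+(1-t)B$ with $t=\bra{V_{max}}\rho\ket{V_{max}}/\lambda_{max}$, the fixed-point check from $\bra{V_{max}}\sigma\ket{V_{max}}=\lambda_{max}$, trace preservation from $\Tr_1[\mathcal{C}]=\mathds{I}$, and the identification of the two inequalities as (i) the condition making $\Phi(\rho)$ a convex combination of the states $\sigma$ and $B$ and (ii) the condition making the admissible set invariant under iteration, are all sound.

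The genuine gap is the step you yourself flag as ``the main obstacle'' and then never close: positivity. Your block reduction is right --- grouping terms gives $\mathcal{C}=M\otimes(\ketbra{V_{max}}{V_{max}})^\intercal+B\otimes\bigl(\mathds{I}-(\ketbra{V_{max}}{V_{max}})^\intercal\bigr)$ with $M=\tfrac{1}{\lambda_{max}}\sigma+(1-\tfrac{1}{\lambda_{max}})B$, so $\mathcal{C}\geq0$ (equivalently, separability of $\mathcal{C}$) holds iff $\sigma\geq(1-\lambda_{max})B$ --- but your final paragraph then claims the ``positivity check'' is assembled, when it neither was proved nor can be proved from the stated hypotheses. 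The condition $\bra{V_{max}}B\ket{V_{max}}\leq\lambda_{max}$ does not imply $\sigma\geq(1-\lambda_{max})B$: take a qutrit with $\sigma=\mathrm{diag}(2/5,3/10,3/10)$, so $\lambda_{max}=2/5$ and $\ket{V_{max}}=\ket{0}$, and $B=\ketbra{1}{1}$; then $\bra{V_{max}}B\ket{V_{max}}=0\leq\lambda_{max}$, yet $\sigma-(1-\lambda_{max})B=\mathrm{diag}(2/5,-3/10,3/10)\not\geq0$, so $\mathcal{C}$ is not positive semidefinite and a fortiori not separable. You must therefore commit to one of two readings. If the theorem asserts only a linear trace-preserving map that sends the restricted domain $\{\rho:\bra{V_{max}}\rho\ket{V_{max}}\leq\lambda_{max}\}$ into itself --- the only reading under which the stated hypotheses suffice --- then your convex-combination and invariance arguments already finish the proof, and the discussion of $\mathcal{C}\geq0$ should end with the observation that complete positivity genuinely fails in general. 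If instead ``separable family of channels'' is to mean a completely positive map with separable Choi matrix, then the hypothesis on $B$ must be strengthened to $\sigma\geq(1-\lambda_{max})B$, which makes separability manifest from the block form above. As written, your proposal proves the weak reading while asserting the strong one.
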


Analogously to the SDP used in \cite{2209.06806v1} to find the above characterization
we have a way to find the channels that have as fixed points some desired
states.
\begin{corollary}
    The SDP for finding the channel with minimum trace with two fixed points is the
    following
\begin{equation}
    \begin{aligned}
        & \underset{X}{\text{maximize}}
        & & -\Tr[X] \\
        & \text{subject to}
        & &  \tr_{\mathcal{H}_2}[X(\mathds{1}_{\mathcal{H}_1}\otimes \sigma^{\intercal}_0)]=\sigma_0\\
        & & &  \tr_{\mathcal{H}_2}[X(\mathds{1}_{\mathcal{H}_1}\otimes \sigma^{\intercal}_1)]=\sigma_1\\
        & & & X\geq 0.
    \end{aligned}
    \label{eq:sdp}
\end{equation}

A trace-preserving channel is found as follows, in terms of a state $B$,
\begin{equation}
    \mathcal{C} = X + B\otimes(\mathds{1}-\tr_{H_1}[X]).
\end{equation}
We further require that 
\begin{equation}
    \tr[X(\mathds{1}_{H_1}\otimes B^\intercal)]< 1,
\end{equation}
so that iterations converge.
\end{corollary}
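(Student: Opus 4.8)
The plan is to read the two equality constraints as fixed-point conditions through the Choi--Jamio\l{}kowski correspondence and then to verify directly that the completed operator is the Choi matrix of a channel that still fixes $\sigma_0$ and $\sigma_1$, exactly paralleling the construction behind Theorem~\ref{thm:decomposition}. Recall that a map with Choi matrix $X$ acts as $\Phi(\rho)=\tr_{\mathcal{H}_2}[X(\mathds{1}_{\mathcal{H}_1}\otimes\rho^{\intercal})]$, and is completely positive iff $X\geq 0$ and trace-preserving iff $\tr_{H_1}[X]=\mathds{1}$. With this dictionary the constraint $\tr_{\mathcal{H}_2}[X(\mathds{1}_{\mathcal{H}_1}\otimes\sigma_i^{\intercal})]=\sigma_i$ states precisely that $\sigma_i$ is a fixed point of the (generally sub-normalized) map generated by $X$, so every feasible point is a positive ``core'' fixing both states; the rank-one core $\sigma\otimes(\ketbra{V_{max}}{V_{max}})^{\intercal}/\lambda_{max}$ of Theorem~\ref{thm:decomposition} is the single-state instance of such an $X$, and the objective $-\Tr[X]$ selects the minimal such core.

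First I would show the completion is a bona fide channel. Taking the partial trace over $\mathcal{H}_1$ and using $\tr[B]=1$,
\begin{equation}
    \tr_{H_1}[\mathcal{C}]=\tr_{H_1}[X]+\tr[B]\,(\mathds{1}-\tr_{H_1}[X])=\mathds{1},
\end{equation}
so $\mathcal{C}$ is trace-preserving. Complete positivity follows from $X\geq 0$ together with $B\geq 0$ and $\mathds{1}-\tr_{H_1}[X]\geq 0$; the residual-block inequality is where the objective enters, since minimizing $\Tr[X]=\tr[\tr_{H_1}[X]]$ is what keeps $\tr_{H_1}[X]$ below $\mathds{1}$ and simultaneously maximizes the decaying block $B\otimes(\mathds{1}-\tr_{H_1}[X])$.

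Next I would check that the added term does not disturb the fixed points. For either state,
\begin{equation}
    \Phi(\sigma_i)=\sigma_i+B\,\tr[(\mathds{1}-\tr_{H_1}[X])\sigma_i^{\intercal}]=\sigma_i+B\bigl(1-\tr[X(\mathds{1}\otimes\sigma_i^{\intercal})]\bigr),
\end{equation}
and tracing the corresponding SDP constraint gives $\tr[X(\mathds{1}\otimes\sigma_i^{\intercal})]=\tr[\sigma_i]=1$, so the correction vanishes and $\Phi(\sigma_i)=\sigma_i$. For convergence I would decompose $\Phi$ into the core map $\rho\mapsto\tr_{\mathcal{H}_2}[X(\mathds{1}\otimes\rho^{\intercal})]$, which reproduces the fixed points with unit weight, and the rank-one remainder $\rho\mapsto B\,\tr[(\mathds{1}-\tr_{H_1}[X])\rho^{\intercal}]$, which feeds weight back into the $B$ direction; the strict bound $\tr[X(\mathds{1}_{H_1}\otimes B^{\intercal})]<1$ prevents $B$ from itself becoming a trace-preserving fixed direction, so that the complement of the fixed-point cone is damped and the iterates collapse onto the cone spanned by $\sigma_0,\sigma_1$.

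The hard part will be making this last step rigorous. The positivity requirement $\mathds{1}-\tr_{H_1}[X]\geq 0$ is not automatic from minimality of $\Tr[X]$ alone, so one must argue (or separately impose) that the optimal $X$ genuinely satisfies it; and turning the single scalar inequality $\tr[X(\mathds{1}\otimes B^{\intercal})]<1$ into an honest convergence statement for $\Phi^{n}$ requires a spectral-radius or contraction estimate on the restriction of $\Phi$ to the complement of $\mathrm{span}\{\sigma_0,\sigma_1\}$, rather than the overlap with $B$ alone.
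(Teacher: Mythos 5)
Your reconstruction is faithful to what the paper actually provides, which is very little: the corollary is stated \emph{without proof}, as the two-state analogue of the SDP of \cite{2209.06806v1}, and the only verifications the paper ever makes explicit are the two you carry out --- trace preservation, which follows from $\tr[B]=1$ via $\tr_{H_1}[\mathcal{C}]=\tr_{H_1}[X]+\tr[B]\left(\mathds{1}-\tr_{H_1}[X]\right)=\mathds{1}$, and preservation of the fixed points, which follows because taking the full trace of the constraint $\tr_{\mathcal{H}_2}[X(\mathds{1}_{\mathcal{H}_1}\otimes\sigma_i^{\intercal})]=\sigma_i$ forces $\tr[X(\mathds{1}\otimes\sigma_i^{\intercal})]=\tr[\sigma_i]=1$, so the correction term proportional to $B$ vanishes on $\sigma_i$. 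These are precisely the computations that appear in the paper only later, in the proof of Theorem~\ref{thm:teoremaspecialChoi}, so as a reconstruction of the intended argument your proposal is on target, and your final paragraph correctly identifies the two places where the statement is not actually established.

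There is, however, one genuine error of direction in your complete-positivity step. You propose to get $\mathcal{C}\geq 0$ blockwise, from $X\geq 0$, $B\geq 0$ and $\mathds{1}-\tr_{H_1}[X]\geq 0$, and you suggest that minimizing $\Tr[X]$ is what keeps $\tr_{H_1}[X]$ below $\mathds{1}$; you then say one must ``argue (or separately impose)'' this. It can be neither argued nor imposed, because it is violated by the paper's own paradigm solutions whenever the fixed points are mixed: for a single fixed point the optimal core is $X=\sigma\otimes(\ketbra{V_{max}}{V_{max}})^{\intercal}/\lambda_{max}$ (Theorem~\ref{thm:decomposition}), whose partial trace $(\ketbra{V_{max}}{V_{max}})^{\intercal}/\lambda_{max}$ has eigenvalue $1/\lambda_{max}>1$ for mixed $\sigma$; similarly, in Theorem~\ref{thm:teoremaspecialChoi} one has $\|\Pi_0\|/\tr[\Pi_0\sigma_0]\geq 1$, so $\mathds{1}-\tr_{H_1}[X]$ is generically \emph{not} positive semidefinite. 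Imposing your condition would restrict the corollary to pure fixed points. What must be shown instead is positivity of the full sum $\mathcal{C}=X+B\otimes(\mathds{1}-\tr_{H_1}[X])$, where the negative directions of the second block are compensated by $X$ on the common support; this is exactly the role of the side conditions on $B$ that the paper attaches to its decompositions, such as $\bra{V_{max}}B\ket{V_{max}}\leq\lambda_{max}$ in Theorem~\ref{thm:decomposition} and the analogous inequality in Theorem~\ref{thm:teoremaspecialChoi}. Your second reservation --- that the scalar condition $\tr[X(\mathds{1}_{H_1}\otimes B^{\intercal})]<1$ does not by itself constitute a convergence proof for the iterates $\Phi^{n}$, which would require a spectral or contraction estimate on the complement of the fixed-point set --- is well taken, but that is a gap in the paper itself, which asserts convergence without argument.
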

We nevertheless, lack a general characterization of the solution $X$ in terms
of the states $\sigma_0$ and $\sigma_1$ in contrast to the
one-state case. Observe that if $\{\sigma_0,\sigma_1\}\in\mathcal{H}$ then
$X\in\mathcal{H}\otimes\mathcal{H}$ and in general $X$ is a non-separable operator.
As a special case, we have the following characterization for when $X$ is separable.  

%

\begin{theorem}
    \label{thm:teoremaspecialChoi}
    A channel with Choi matrix $\mathcal{C}$ that has two fixed points $\sigma_0$ and
    $\sigma_1$ can be written as 
    \begin{align}
        \mathcal{C}&=\sigma_0\otimes\frac{\Pi_0^\intercal}{\tr[\Pi_0\sigma_0]}
        +\sigma_1\otimes\frac{\Pi_1^\intercal}{\tr[\Pi_1\sigma_1]}\nonumber\\
        &+B\otimes(\mathds{I}-\frac{\Pi_0^\intercal}{\tr[\Pi_0\sigma_0]}-\frac{\Pi_1^\intercal}{\tr[\Pi_1\sigma_1]}),
        \label{eq:chandecomp2}
    \end{align}
    if and only if the states $\sigma_0$ and $\sigma_1$ can be
    unambiguously discriminated. This means, there exist the positive semidefinite operators $\Pi_0$ and $\Pi_1$ 
    such that $\tr[\sigma_0\Pi_1]=\tr[\sigma_1\Pi_0]=0$
    and $\tr[\Pi_0\sigma_0]\neq0$ and $\tr[\Pi_1\sigma_1]\neq0$. 
    Also we require $\tr[B\Pi_0]/\tr[\Pi_0\sigma_0]+\tr[B\Pi_1]/\tr[\Pi_1\sigma_1]<1$.
\end{theorem}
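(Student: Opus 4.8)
The plan is to reduce both implications to a single direct computation of the channel's action on the two candidate fixed points, and then to recognize the resulting algebraic constraints as precisely the defining conditions of unambiguous state discrimination. Throughout I would work with the channel reconstructed from its Choi matrix as $\Phi(\rho)=\tr_{\mathcal{H}_2}[\mathcal{C}(\mathds{1}_{\mathcal{H}_1}\otimes\rho^\intercal)]$, the convention fixed by the corollary, and abbreviate $P_i=\Pi_i^\intercal/\tr[\Pi_i\sigma_i]$ so that the ansatz reads $\mathcal{C}=\sigma_0\otimes P_0+\sigma_1\otimes P_1+B\otimes(\mathds{I}-P_0-P_1)$. The only identity I need repeatedly is $\tr[P_i\rho^\intercal]=\tr[\Pi_i\rho]/\tr[\Pi_i\sigma_i]$, which follows from $\tr[A^\intercal B^\intercal]=\tr[AB]$ and in particular yields the normalization $\tr[P_i\sigma_i^\intercal]=1$ built into the denominators.

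First I would evaluate $\Phi(\sigma_0)$ for the ansatz without assuming anything about the cross-overlaps. Using the normalization above together with $\tr[\sigma_0]=1$, the three coefficients collapse to
\begin{equation}
    \Phi(\sigma_0)=\sigma_0+\frac{\tr[\Pi_1\sigma_0]}{\tr[\Pi_1\sigma_1]}\,(\sigma_1-B),
\end{equation}
and symmetrically $\Phi(\sigma_1)=\sigma_1+(\tr[\Pi_0\sigma_1]/\tr[\Pi_0\sigma_0])(\sigma_0-B)$. This single formula drives both directions. For the \emph{if} direction I assume the discrimination conditions $\tr[\Pi_1\sigma_0]=\tr[\Pi_0\sigma_1]=0$; the correction terms vanish and $\sigma_0,\sigma_1$ are fixed, while $\tr_{\mathcal{H}_1}[\mathcal{C}]=P_0+P_1+(\mathds{I}-P_0-P_1)=\mathds{I}$ gives trace preservation independently of any assumption. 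For the \emph{only if} direction I read the display backwards: if $\sigma_0$ is fixed then $\tr[\Pi_1\sigma_0]\,(\sigma_1-B)=0$, and provided $\sigma_1\neq B$ this forces $\tr[\Pi_1\sigma_0]=0$; symmetrically $\tr[\Pi_0\sigma_1]=0$. Together with $\tr[\Pi_i\sigma_i]\neq0$, which is automatic from the ansatz being well defined, these are exactly the existence conditions for unambiguously discriminating $\sigma_0$ from $\sigma_1$, so the equivalence closes.

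The remaining obligation, and the one I expect to be the real work, is complete positivity $\mathcal{C}\geq0$. The first two summands are manifestly positive since $\Pi_i\geq0$ and $\tr[\Pi_i\sigma_i]>0$, so the issue is entirely the third term $B\otimes(\mathds{I}-P_0-P_1)$: I would need $\mathds{I}-P_0-P_1\geq0$, or else a choice of $B$ whose positive contribution offsets the indefinite part, exactly as in the single-fixed-point construction of Theorem \ref{thm:decomposition}. Because $P_i$ is invariant under rescaling $\Pi_i$, this is a genuine constraint on the \emph{directions} of the discrimination operators rather than their normalization, and the inequality $\tr[B\Pi_0]/\tr[\Pi_0\sigma_0]+\tr[B\Pi_1]/\tr[\Pi_1\sigma_1]<1$ is what guarantees that $\mathds{I}-P_0-P_1$ retains positive weight against $B$ and hence that the non-fixed-point block decays under iteration.

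Finally I would flag the one genuine gap in the backward reading above: the step $\tr[\Pi_1\sigma_0]\,(\sigma_1-B)=0\Rightarrow\tr[\Pi_1\sigma_0]=0$ needs $\sigma_1\neq B$, and more generally the argument presumes $\{\sigma_0,\sigma_1,B\}$ are affinely independent; if $B$ lies in the affine hull of the fixed points the decomposition degenerates and a state can be fixed without the cross-orthogonality. I would therefore either add affine independence of $B$ from $\{\sigma_0,\sigma_1\}$ as a standing hypothesis, or argue that any degenerate $B$ can be perturbed without changing the fixed-point set, which isolates the discrimination conditions as the essential content of the theorem.
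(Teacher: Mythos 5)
Your proposal is correct and follows essentially the same route as the paper: both evaluate the reconstructed channel on the candidate fixed points via $\tr_{\mathcal{H}_2}[\mathcal{C}(\mathds{1}\otimes\sigma_0^\intercal)]=\sigma_0+\frac{\tr[\Pi_1\sigma_0]}{\tr[\Pi_1\sigma_1]}(\sigma_1-B)$ (and symmetrically for $\sigma_1$), then read the equivalence off the vanishing of the cross-overlap terms. Your write-up is in fact the more careful one: the gap you flag --- that the backward implication needs $\sigma_1\neq B$, i.e.\ affine independence of $B$ from the fixed points --- is present but unacknowledged in the paper's own proof, which also never addresses trace preservation or complete positivity of $\mathcal{C}$.
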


\begin{proof}
    Observe that the channel $\mathcal{C}$ 
    fulfills the constrictions of the SDP (\ref{eq:sdp}), specifically,
    \begin{equation}
        \tr_{H_2}[\mathcal{C}(\mathds{1}\otimes\sigma_0)]= \sigma_0 + \sigma_1\frac{\tr[\sigma_0\Pi_1]}{\tr[\Pi_1\sigma_1]}-B\frac{\tr[\sigma_0\Pi_1]}{\tr[\Pi_1\sigma_1]},
    \end{equation}
    which is equal to $\sigma_0$ if and only if $\tr[\sigma_0\Pi_1]=0$. 
    Analogously $\tr_{H_2}[\mathcal{C}(\mathds{1}\otimes\sigma_1^\intercal)]= \sigma_1$
    if and only if $\tr[\sigma_1\Pi_0]=0$.
\end{proof}

\begin{figure}
    \includegraphics[scale=2]{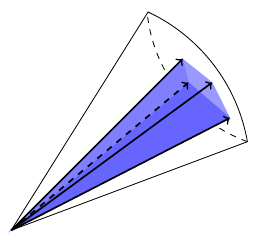}
    \caption{Cone embedding. The big transparent cone represents an SDR cone. The
        blue polyhedral cone represents a classical cone.}
    \label{fig:conos}
\end{figure}

The generalization to the cases of more fixed points is straightforward, we put
more conditions in the SDP (\ref{eq:sdp}), analogously add projectors
in the decomposition (\ref{eq:chandecomp2}).
\section{Simulation of a polyhedral cone}
Observe that a classical cone can be easily simulated with an SDR cone.
We start with a state $\rho_0$, apply a multiple-fixed point channel, end
up in a state $\sigma_i$. Then, we can apply a random channel to get
out of the fixed point, and apply the multiple-fixed point channel again
to end up in another state $\sigma_j$ etc, as depicted below.
\begin{align}
    \rho_0&\rightarrow\Phi^n[\rho_0]\approx\sigma_i\rightarrow\varphi_{\text rand}[\sigma_i]=\rho_0^\prime\\
    \rho_0^\prime&\rightarrow\Phi^n[\rho_0^\prime]\approx\sigma_j\rightarrow\varphi_{\text rand}[\sigma_j]=\rho_0^{\prime\prime}\\
    &\ldots
\end{align}
$\Phi$ is the multiple fixed point channel and $\varphi_{\text rand}$ is a random channel.
\section{Example}
Consider the Hilbert space spanned by two qubits. We can take the Bell basis and label
it as follows
\begin{align}
    \ket{V_0}&=\frac{\ket{00}+\ket{11}}{\sqrt{2}},\nonumber\\
    \ket{V_1}&=\frac{\ket{00}-\ket{11}}{\sqrt{2}},\nonumber\\
    \ket{V_2}&=\frac{\ket{01}+\ket{10}}{\sqrt{2}},\nonumber\\
    \ket{V_3}&=\frac{\ket{01}-\ket{10}}{\sqrt{2}}.
\end{align}
We also define the states
\begin{align}
    \ket{V_1^0}&=\alpha_0\ket{V_1}+\beta_0\ket{V_2},\nonumber\\
    \ket{V_2^0}&=\delta_0\ket{V_1}+\epsilon_0\ket{V_2},\nonumber\\
    \ket{V_1^1}&=\alpha_1\ket{V_1}+\beta_1\ket{V_2},\nonumber\\
    \ket{V_2^1}&=\delta_1\ket{V_1}+\epsilon_1\ket{V_2}.
\end{align}
Now let us define the mixed states
\begin{align}
    \sigma_0&=s_0\ketbra{V_0}{V_0}+s_1\ketbra{V_1^0}{V_1^0}+s_2\ketbra{V_2^0}{V_2^0}.\\
    \sigma_1&=r_0\ketbra{V_1}{V_1}+r_1\ketbra{V_1^1}{V_1^1}+r_2\ketbra{V_2^1}{V_2^1}.
\end{align}
We can easily verify that $\bra{V^1}\sigma_0\ket{V^1}=\bra{V^0}\sigma_1\ket{V^0}=0$ and therefore 
the characterization of fixed points from Eq. (\ref{eq:chandecomp2}) is valid.
\section{Discussion}
We develop here a general model to simulate a polyhedral cone using quantum systems. This
model is very general as it allows mixed states to be the vectors that subtend the cone.
We observe here a passing from a quantum system that has non-classical correlations
to a system that behaves classically, in terms of the stochastic processes they
produce. This is an example of a change of behavior from quantum dynamics into
classical dynamics which is closely related to the theory of einselection and
quantum Darwinism. Here the mechanism is analogous to the case of einselection
because it involves the repetitive action of a quantum channel.
However, the process yields the fixed points of the channel. We describe here
a way to construct quantum channels with desired specific fixed points. It can
be viewed as an engineering of quantum channels with multiple fixed points, which
gives rise to the classical behavior in terms of a polyhedral cone.

Observe that the most general characterization of our method is given by the SDP (\ref{eq:sdp})
which yields an operator $X$ that in general can be non-separable.
This means $X\in\mathcal{H}\otimes\mathcal{H}$ where $\mathcal{H}$ is the Hilbert space of
$\sigma_0$ and $\sigma_1$. However, in general, $X$ is not separable in those subspaces. In theorem
\ref{thm:teoremaspecialChoi} we explore the separable case. A full 
characterization of the solution $X$ in the non-separablenon-separable  case is a
matter of future research.

The mechanism that we study here is a specific one, however, the formalism could
be extended to consider other possible mechanisms that reduce the quantum
dynamics of a system into a classical one. This would extend the study of
quantum-to-classical transitions.
\section{Acknowledgements}
I thank fruitful discussions with A. Winter on this topic.
\bibliography{bibliography}
\end{document}